\date{}
\definecolor{darkred}{rgb}{.7,0,0}
\definecolor{darkgreen}{rgb}{0,0.5,0}
\definecolor{darkblue}{rgb}{0,0,0.7}
\definecolor{darkyellow}{rgb}{0.0,0.704,0.7}
\newcommand{\km}{k_{\rm min}}
\newtheorem{theorem}{Theorem}[section]
\newtheorem{lem}{Lemma}[section]
\newtheorem{rem}{Remark}[section]
\newcounter{hypA}
\newcounter{hypB}
\def\cP{{\mathsf P}}
\def\*{|\!|\!|}
\newcommand{\bbR}{\mathbb R}
\newcommand{\E}{\mathbb{E}}
\renewcommand{\phi}{\varphi}
\newcommand{\bbE}{\E}
\newcommand{\bF}{\overline{f}}
\begin{document}

\begin{center}

{\Large \textbf{Sequential Monte Carlo Methods for Bayesian Elliptic Inverse Problems}}

\bigskip

BY ALEXANDROS BESKOS$^{1}$, AJAY JASRA$^{2}$, EGE A. MUZAFFER$^{2}$ \& ANDREW M. STUART$^{3}$

{\footnotesize $^{1}$Department of Statistical Science,
University College London, London, WC1E 7HB, UK.}\\
{\footnotesize E-Mail:\,}\texttt{\emph{\footnotesize alex@stats.ucl.ac.uk}}\\
{\footnotesize $^{2}$Department of Statistics \& Applied Probability,
National University of Singapore, Singapore, 117546, SG.}\\
{\footnotesize E-Mail:\,}\texttt{\emph{\footnotesize staja@nus.edu.sg, m.ege85@nus.edu.sg}}\\
{\footnotesize $^{3}$Department of Mathematics,
University of Warwick, Coventry, CV4 7AL, UK.}\\
{\footnotesize E-Mail:\,}\texttt{\emph{\footnotesize a.m.stuart@warwick.ac.uk}}
\end{center}

\begin{abstract}
In this article we consider a Bayesian inverse problem associated to elliptic partial differential equations (PDEs) in two
and three dimensions. This class of inverse problems is important in 
applications such as hydrology, but the complexity of the link function
between unknown field and measurements can make it difficult to 
draw inference from the associated posterior. We prove that for this inverse
problem a basic SMC method has a Monte Carlo rate of
convergence with constants which are independent of the dimension of
the discretization of the problem; indeed convergence of the SMC method
is established in a function space setting. We also develop an enhancement 
of the sequential Monte Carlo (SMC) methods for inverse problems which were
introduced in \cite{kantas}; the enhancement is designed to deal with 
the additional complexity of this elliptic inverse problem.  The efficacy 
of the methodology, and its desirable theoretical properties, are
demonstrated on numerical examples in both two and three dimensions.
\\
\textbf{Keywords}: Inverse Problems, Elliptic PDEs, Groundwater
Flow, Adaptive SMC, Markov chain Monte Carlo.
\end{abstract}

\section{Introduction}\label{sec:intro}

The viability of the Bayesian approach to inverse problems was
established in the pioneering text \cite{KS05} which, in particular,
demonstrated the potential for Markov chain Monte Carlo (MCMC) methods in this context. Nonetheless,
the high dimensional nature of the unknown, often found from
discretizing a field, leads to difficult problems in the design of
proposals which are cheap to implement, yet which mix efficiently. 
One recent approach to tackle these problems has been the
development of algorithms with mesh-free mixing times, such as those
highlighted in \cite{cotter,hoang}; these non-standard MCMC algorithms 
avoid the unnecessary penalties incurred by naive proposals related
to exploration of the part of the parameter space dominated by the prior. 
Nonetheless, in the large dataset or small observational noise
regimes, one is still confronted with an inference problem 
in high dimensions which, whilst of smaller order than the dimension 
of the PDE solver, exhibits wide variations in scales in different
coordinates of the parameterizations, 
leading to substantial challenges for algorithmic tuning.

A different approach, which we will adopt here, involves SMC samplers \cite{delm:06}.
These are particle methods which, in the context of Bayesian inverse problems,
build an approximation to a sequence of measures which interpolate
from the prior to the posterior; the sequential nature of the approximation
allows for adaptation of the particle distribution and weights from
the (typically simple) prior to the (potentially very complex) posterior.
Recent work in the context of inverse problems \cite{kantas} has shown how, by
using the aforementioned dimension-independent MCMC methods within SMC,
it is possible to construct algorithms which combine the desirable
dimension-independent aspects of novel MCMC algorithms with the
desirable self-adaptation of particle methods. This combination is beneficial
for complex posteriors such as those arising in the large dataset or
small noise regimes; in particular the computational results in
\cite{kantas} demonstrate an order of magnitude speed-up of these new
SMC methods over the MCMC methods highlighted in \cite{cotter}, 
within the context of the inverse problem for the initial condition 
of the Navier-Stokes equation. Furthermore, 
recent works \cite{beskos,delmoral,delm:06,jasra} have shown that important aspects of this SMC algorithm for inverse problems, such as adaptation, tempering 
and parallelization, have the potential to provide effective methods even 
for high-dimensional inverse problems.


The contributions of this article are three-fold:
\begin{enumerate}
\item{A computational study of SMC methods for
a class of Bayesian inverse problems which arise in
applications such as hydrology \cite{iglesias}, 
and are more challenging to fit, in comparison 
to the Navier-Stokes inverse problem which was the focus of
the development of novel SMC methods in \cite{kantas}; furthermore, with
modification of the measurement set-up, the inverse problems considered
also find application in medical imaging problems such as EIT \cite{KS05}.}

\item{An enhancement of the class of SMC methods introduced in 
\cite{kantas} which leads to greater efficiency and, in particular, the 
ability to efficiently solve the elliptic inverse problems which are the 
focus of this paper.}

\item{A proof of the fact that these SMC algorithms have Monte Carlo
convergence rates which are mesh-independent and, indeed converge in
the function space setting. This complements related theoretical work
\cite{HSV} which establishes mesh-independence mixing for the novel
MCMC methods which are used as proposal kernels within the SMC
approach of \cite{kantas} which we build upon here.}
\end{enumerate}

This article is structured as follows. In Section \ref{3dinv} we describe the Bayesian model and associated PDE. In Section \ref{sec:SMC} our computational procedure is outlined, along
with our theoretical results. In Section \ref{sec:numerics} we present our numerical results. The article is concluded in Section \ref{sec:summ} with a discussion of areas for future work.

\section{Modelling}
\label{3dinv}
Consider two normed linear spaces, $\mathcal{K}$ and $\mathcal{Y}\subseteq\mathbb{R}$, corresponding to the state space of the parameters ($u\in\mathcal{K}$) and observations
($y\in\mathcal{Y}$) respectively. We will observe data at spatial locations $x\in\mathcal{X}\subset\mathbb{R}^d, d\in\{2,3\}$ and we denote the observation at location $x$ as $y_x$.
Let $G:\mathcal{X}\times \mathcal{K}\rightarrow\mathcal{Y}$ 
and, for each $x \in \mathcal{X}$, let $\epsilon_x\in\mathcal{Y}$ be a random variable of zero mean; 
then we will be concerned with models of the form:
$$
y_x = G(x;u) + \epsilon_x.
$$
Here $G(x;u)$ is an underlying system behaviour for a given parameter $u$,
and  $\epsilon_x$ expresses measurement (and indeed sometimes model error) 
at location $x$. 
In our context, $G$ is associated to the solution of a PDE, with parameter $u$. We are interested in drawing inference on $u$, given a prior distribution on $u\in\mathcal{K}$, conditional upon observing realizations of $y_x$ for 
a set of points $x\in O\subseteq\mathcal{X}$, with $\textrm{Card}(O)<+\infty$. This is the framework of our Bayesian inverse problem.
In subsection \ref{ssec:fm} we define the forward model, and in
subsection \ref{ssec:modelK} we describe prior modelling on our unknown.
Subsection \ref{ssec:bayesian} shows that the posterior distribution is
well-defined and states a key property of the log-likelihood, used in
what follows.

\subsection{Forward Model}
\label{ssec:fm}

In this paper, we focus on the general scenario where the forward map $G$ is described by an elliptic PDE. In particular, we work with a 
problem of central significance in hydrology, namely the 
estimation of subsurface flow from measurements of the pressure 
(hydraulic head) at certain locations $x$ in the domain of interest. The pressure
and velocity are linked by Darcy's law in which the subsurface permeability
appears as a parameter; estimating it is thus a key step in predicting the
subsurface flow. In this subsection we define the forward map from
permability to pressure space.

In detail, we consider the $d-$dimensional cube 
$ \mathcal{X}= [ \frac{-\pi}{2}, \frac{\pi}{2} ]^d$ as our domain, in
both the cases $d=2,3.$
Define a mapping $p: \mathcal{X} \to \mathbb{R}$, denoting pressure
(or hydraulic head), $v:\mathcal{X} \to \mathbb{R}^3$, denoting a quantity
proportional to velocity, and $u:\mathcal{X} \to \mathbb{R}$, denoting permeability (or hydraulic conductivity) of soil \cite{MT}.
The behaviour of the system is described through the elliptic PDE:
\begin{subequations}
\label{eq2}
\begin{align}
v(x)&=-u(x) \nabla_x p(x),&x\in \mathcal{X}&\\
 -\nabla_x \cdot \bigl(v(x)\bigr) &= f(x),&x\in \mathcal{X}&\\
 p(x) & = 0,&x\in \partial  \mathcal{X}&.
\end{align}
\end{subequations}
Equation (\ref{eq2}a) is Darcy's law and contains the 
permeability $u$ (for us the key parameter); 
equation (\ref{eq2}b) expresses continuity of
mass and here $f:\mathcal{X} \to \mathbb{R}$ is assumed known and 
characterizes the source/sink configuration; in equation (\ref{eq2}c) 
$\partial  \mathcal{X}$ is the boundary of the domain, 
and we are thus assuming a homogeneous boundary condition on the boundary
pressure -- other boundary conditions, specifying the flux, are also possible.
Together equations \eqref{eq2} define an elliptic PDE for pressure $p$.

If $u$ is in $L^{\infty}(\mathcal{X})$ and lower bounded by a positive
constant $k_{\rm min}$ a.e. in $\mathcal{X}$
then, for every $f \in H^{-1}(\mathcal{X})$, there is a unique solution 
$p \in H^1_0(\mathcal{X})$ to the PDE \eqref{eq2} satisfying
\begin{equation}
\label{eq:est}
\|p\|_{H^1_0} \le \frac{1}{k_{\rm min}}\|f\|_{H^{-1}};
\end{equation}
see \cite{dashti} and the references therein.
In this setting, the forward map $G(x;u):= p(x)$ is well-defined
and thus corresponds to 
solution of the elliptic PDE for a given permeability field $u$. 
A typical choice of the source/sink function $f$ is
\begin{equation}
f(x) = \sum_i c_i \delta_{x_i}(x).  
\label{eq3}
\end{equation}
The set of points $\{x_i\}$ denote  the known position of sources or sinks, 
and the signs of each $c_i$ determine whether one has a source or sink
at $x_i$ \cite{iglesias}. We note that the cleanest setting for the
mathematical formulation of the problem requires $f \in H^{-1}(\mathcal{X})$
and, in theory, will require mollification of the Dirac's at each $x_i$;
in practice this modification makes little difference to the inference.

\subsection{Prior Modelling of Permeablity $u$}
\label{ssec:modelK}

We describe the modelling of $u$ in three dimensions; simplification
to the two dimensional setting is straightforward.
We begin by expressing the unknown model parameter as a Fourier series:
\begin{equation}
u(x) = \bar{u}(x) + \sum_{k\in\mathbb{Z}^{3}} u_k e_k(x). 
\label{parameter_expansion}
\end{equation}
Here we have scaled Fourier coefficients
\begin{equation}
 e_k(x) = a_k \exp(i k \cdot x)\ ,\quad k\in \mathbb{Z}^3\ 
\label{basis}
\end{equation}
and the real coefficients $\{a_k\}$,
complex coefficients $\{u_k\}$ (satisfying $\bar{u}_k = u_{-k}$)
and real-valued $L^{\infty}({\mathcal X})$ function $\bar{u}$ will be
chosen to enforce the mathematically necessary (and physically 
sensible) positivity restriction 
\begin{equation}
\label{eq:deen}
u(x)\ge \km>0,  \quad\quad x \in \mathcal{X}.
\end{equation} 
The use of Fourier series in principle enables the representation of
arbitrary functions in $L^2(\mathcal{X})$ by use of periodic
extensions. However we will impose a rate of decay on the $\{a_k\}$,
in order to work in the setting of inversion for this problem,
as developed in \cite{hoang,schwab}; this rate of decay will imply
a certain degree of smoothness in the function $(u-\bar{u})(\cdot).$
Noting that the functions $\exp(i k \cdot x)$ have $L^{\infty}(\mathcal{X})$
norm equal to one, we can place ourselves in the setting of \cite{hoang,schwab}
by assuming that, for some $q>0$, $C>0$,
\begin{equation}
\label{eq:cond}
\sum_k | a_k |_{\infty} < \infty\ , \quad \sum_{k:|k|_{\infty}>j} | a_k |_{\infty} < C j^{-q}.
\end{equation}
We choose $a_k$ of the form 
\begin{equation}
a_k=a |k|_{L^{\infty}}^{-\alpha}\label{eq:adef}
\end{equation}
and then impose $\alpha > 3$ in dimension $d=3$ or
$\alpha>2$ in dimension $d=2$.

Given this set-up, we need to find a suitable prior for $u$, 
so that the forward model $G(x;u)$ is almost-surely well-defined, as well 
as reflecting any prior statistical information we may have.
There are several widely adopted approaches in the literature for prior 
parameterization of the permeability, the most common being the 
log-normal choice (see \cite{tarantola} for details and, for example, \cite{iglesias}
for a recent application),
widely adopted by geophysicists, and the uniform case \cite{hoang,schwab}
which has been succesfully adopted in the computational mathematics
literature, building on earlier work of Schwab in uncertainty
quantification \cite{GS}. We work with the uniform priors popularized
by Schwab: we choose $u_k\stackrel{i.i.d.}{\sim}\mathcal{U}_{[-1,1]}$
in the representation  \eqref{parameter_expansion} for $u$, resulting
in a pushforward measure $\nu_0$ on $u$ as in \cite{schwab}. We let
$\mathcal{K}$ denote the separable 
Banach space found from the closure, with respect
to the $L^{\infty}$ norm, of the set of functions used in the representation 
\eqref{parameter_expansion} of $u(\cdot)$. Then $\nu_0$ is viewed as a measure
on $\mathcal{K}$; see \cite{dashti} for further details. Once the
parameters $a_k$ are chosen to satisfy \eqref{eq:cond}, 
the mean function ${\bar u}(x)$ can be chosen to ensure that there is $\km$ 
such that $u(\cdot)$ is in $L^{\infty}(\mathcal{X})$ and satisfies
\eqref{eq:deen} almost surely with respect to the prior $\nu_0$ on function
$u$.

\subsection{Bayesian Inverse Problem}
\label{ssec:bayesian}

We observe the pressure at certain locations, 
the set of which is denoted as $O\in\mathcal{X}$.
We will suppose that for each $x\in O$ and independently, 
$\epsilon_x\sim\mathcal{N}(0,\sigma^2)$, where $\mathcal{N}(0,\sigma^2)$ is the normal distribution of mean 0 and known variance $\sigma^2$.
Then the log-likelihood is, up to an irrelevant additive
constant, given by
\begin{equation} 
\Phi(u;y):=- \sum_{x \in O}{ \frac{\bigl|G(x;u) - y(x)\bigr|^2}{2\sigma^2} }.
\label{eq1}
\end{equation}

Along with the prior modelling in subsection \ref{ssec:modelK}, this defines a scenario so that the forward model $G(x;\cdot)$ is, almost-surely, well-defined 
and, in fact, Lipschitz. As in \cite{dashti,schwab} we may then define
a posterior $\nu^y$ on $u$ which has density with respect to $\nu_0$ 
given by \eqref{eq1}:
\begin{equation}
\label{eq:posterior}
\frac{d\nu^y}{d\nu_0}(u) \propto \pi(u)=\exp\bigl(-\Phi(u;y)\bigr).
\end{equation}
Exploring the posterior distribution $\nu^y$ is the objective of the paper.
In doing so, the following fact will be relevant; it is easily established
by using the fact that \eqref{eq:deen}
holds almost surely for $u \sim \nu_0$, together
with the bound on the solution of the elliptic PDE 
given in \eqref{eq:est}. 

\begin{lem}
\label{lem:note}
There is a constant $\pi_{\rm min}=\pi_{\rm min}(y)>0$ such that $\pi_{\rm min}
\le \pi(u) \le 1$ almost surely for $u \sim \nu_0.$
\end{lem}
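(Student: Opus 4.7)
The plan is to work directly from the formula $\pi(u)=\exp(-\Phi(u;y))$, so that the required bounds $\pi_{\min}\le \pi(u) \le 1$ translate into uniform bounds $0\le \Phi(u;y)\le C_0(y)$ holding $\nu_0$-almost surely. The upper bound $\pi(u)\le 1$ is immediate: by definition (\ref{eq1}), $\Phi(u;y)$ is a finite sum of nonnegative squared residuals, so $\Phi(u;y)\ge 0$ and $\pi(u)\le e^0=1$. No probabilistic content is used here.

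For the lower bound I would show that $\Phi(u;y)$ is uniformly bounded above by a deterministic constant, $\nu_0$-a.s. Since $O$ is finite and $y$ is a fixed vector of observations, the elementary inequality
$$
\Phi(u;y) \le \frac{1}{\sigma^2}\sum_{x\in O}\bigl(|G(x;u)|^2 + |y(x)|^2\bigr)
$$
reduces matters to proving that $|G(x;u)|=|p(x)|$ is bounded uniformly in $u$ on a $\nu_0$-full set, for the finitely many $x\in O$.

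This is where the two facts flagged in the text enter. By (\ref{eq:deen}), $u\ge k_{\rm min}>0$ $\nu_0$-almost surely, and so the a-priori estimate (\ref{eq:est}) yields $\|p\|_{H^1_0}\le k_{\rm min}^{-1}\|f\|_{H^{-1}}$, with right-hand side deterministic and $u$-independent. To promote this $H^1_0$ bound to a pointwise bound on $p(x)$ at each of the finitely many observation sites $x\in O$, I would invoke elliptic regularity: the decay condition (\ref{eq:cond}) and the almost-sure $L^\infty$-bound $\|u\|_{L^\infty}\le \|\bar u\|_{L^\infty}+\sum_k |a_k|_\infty<\infty$ (from the i.i.d.\ uniform coefficients) provide enough smoothness of the coefficient $u$, and (after the mollification of $f$ discussed below \eqref{eq3}) standard regularity theory places $p$ in a Sobolev/H\"older class embedding into $C^0(\overline{\mathcal X})$, with norm controlled uniformly in $u$ in terms of $k_{\rm min}$, $\|f\|$ and the deterministic $L^\infty$-bound on $u$.

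The main obstacle is precisely this last step: promoting the natural energy estimate (\ref{eq:est}) to a pointwise bound, since in dimensions $d=2,3$ the embedding $H^1_0\hookrightarrow L^\infty$ fails and one genuinely needs the extra smoothness of $u$ encoded in (\ref{eq:cond}) (and of $f$) in order to use elliptic regularity. Once this is in hand, one obtains a deterministic constant $M_0$ with $\max_{x\in O}|G(x;u)|\le M_0$, $\nu_0$-a.s., whence
$$
\Phi(u;y) \le \frac{|O|}{\sigma^2}\bigl(M_0^2 + \max_{x\in O}|y(x)|^2\bigr) =: C_0(y),
$$
and $\pi_{\rm min}(y):=\exp(-C_0(y))>0$ is the required strictly positive constant.
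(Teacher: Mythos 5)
Your proposal is correct and follows essentially the route the paper indicates: the upper bound comes from nonnegativity of the (correctly signed) misfit in \eqref{eq1}, and the lower bound from the almost-sure ellipticity \eqref{eq:deen} combined with the a priori estimate \eqref{eq:est}, which together give a deterministic, $u$-independent bound on the solution and hence on $\Phi(u;y)$. The one place you go beyond the paper's one-line justification is in flagging that $H^1_0\not\hookrightarrow L^\infty$ for $d=2,3$, so the energy estimate alone does not control the point values $p(x)$, $x\in O$; your fix via elliptic regularity (using the uniform two-sided $L^\infty$ bounds on $u$ and the mollified $f$) is a legitimate and necessary patch of a step the paper leaves implicit, and it is the same in spirit as restricting to bounded linear observation functionals on $H^1_0$.
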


We finish by noting that, in algorithmic practice, 
it is typically necessary (see, however, \cite{trunc} in the context of MCMC) to apply a spectral truncation:
\begin{equation}
u(x) = \bar{u}(x) + \sum_{k:\, |k|_{\infty}<c} u_k e_k(x) 
\label{truncation}
\end{equation}
where $c$ is a truncation parameter.
Having defined the desired parameterization of $u$, we consider the truncated 
vector of Fourier coefficients $u_k$ as the object of inference in practice.

\section{Sequential Monte Carlo}
\label{sec:SMC}

In this section we describe the application of SMC methods to
Bayesian inversion. Subsection \ref{ssec:smc} contains an
explanation of the basic methodology and statement of
a basic (non-adaptive) algorithm. Subsection \ref{ssec:t}
contains statement and proof of a convergence theorem for the
basic form of the algorithm, notable because it applies
in infinite dimensional spaces. In subsection \ref{ssec:a}
we describe an adaptive version of the SMC algorithm, which we
use in practice.

\subsection{Standard SMC Samplers}
\label{ssec:smc}

Let $(E,\mathscr{E})$ denote a measure space and $\nu_0$ a probability
measure on that space. We wish to sample from a target probability 
measure $\nu$ on $(E,\mathscr{E})$, which has density with respect to
$\nu_0$ known up to a normalizing constant:
\begin{equation}
\label{eq:posterior}
\frac{d\nu}{d\nu_0}(u) \propto \pi(u).
\end{equation}
We introduce a sequence of ``bridging'' densities which enable us to
connect $\nu_0$ to $\nu$: 
\begin{equation}
\label{eq:aux}
\pi_n(u) \propto \pi(u)^{\phi_n}\ , \quad x\in E,
\end{equation}
where
$
0=\phi_0<\cdots < \phi_{n-1}< \phi_n < \cdots<\phi_p=1; 
$
we refer to the $\phi_j$ as \emph{temperatures}.
We let $\nu_n$ denote the probability measure with density proportional
to $\pi_n$ with respect to $\nu_0$. Assuming that $\pi(u)$ is finite 
$\nu_0$ almost surely we obtain 
\begin{equation}
\label{eq:posteriorn}
\frac{d\nu_n}{d\nu_{0}}(u) \propto \pi(u)^{\phi_n},\quad\quad
\frac{d\nu_n}{d\nu_{n-1}}(u) \propto \ell_{n-1}(u):=\pi(u)^{\phi_n-\phi_{n-1}},
\quad n\in\{1, \cdots, p\}.
\end{equation}
We note that the assumption on $\pi$ being finite is satisfied
for our elliptic inverse problem; see Lemma \ref{lem:note}.
Although $\nu=\nu_p$ may be far from $\nu_0$, careful choice
of the $\phi_n$ can ensure that $\nu_{n}$ is close to $\nu_{n-1}$
allowing gradual evolution of approximation of $\nu_0$ into
approximation of $\nu$. Other choices of 
bridging densities are possible and are discussed in e.g.~\cite{delm:06}.

Let $\{L_{n-1}\}_{n=1}^p$ denote the sequence of (nonlinear) maps on measures
found by applying Bayes's Theorem with likelihood proportional to
$\{\ell_{n-1}\}_{n=1}^p$ and let $\{K_{n}\}_{n=1}^p$ 
be a sequence of Markov kernels (and equivalently, for notational convenience,
the resulting linear maps on measures) with invariant measure 
$\{\nu_{n}\}_{n=1}^p$. 
We define $\{\Phi_n\}_{n=1}^p$ to be the nonlinear maps on measures
found as $\Phi_n=K_n L_{n-1}$.  Explicitly we have, for each $n\geq 1$ and any 
probability measure $\mu$ on $E$:
$$
(\Phi_n \mu)(du) = \frac{\mu(\ell_{n-1}K_n(du))}{\mu(\ell_{n-1})}
$$
where we use the notation 
$\mu(\ell_{n-1}K_n(du)) = \int_E \mu(dy) \ell_{n-1}(y) K_n(y,du)$ 
and $\mu(\ell_{n-1}) = \int_E \mu(dy) \ell_{n-1}(y)$. It then follows that
\begin{equation}
\label{eq:F}
\nu_n=\Phi_n\nu_{n-1}, \quad n\in\{1, \cdots, p\}.
\end{equation}
The standard SMC algorithm is described in Figure \ref{tab:SMC}. It involves
a population of $M$ particles evolving with $n$.
With no resampling, the algorithm coincides with 
annealed importance sampling as in \cite{neal:01}. With resampling at every step
(i.e.\@ the case $M_{\rm thresh}=M$, where $M_{\rm thresh}$ denotes the cut-off point 
for the Effective Sample Size (EES)) we define the empirical approximating measures by
the iteration
\begin{equation}
\label{eq:PF}
\nu_n^M= S^M \Phi_n \nu_{n-1}^M, \quad n\in\{1, \cdots, p\}; \quad\quad \nu_0^M=\frac{1}{M}\sum_{m=1}^M\delta_{u_0^m}
\end{equation}
Here
$$(S^M \mu)(dv) = \frac{1}{M} \sum_{m=1}^M \delta_{v^{(m)}}(dv), \quad v^{(m)} \sim \mu ~~~ {\rm i.i.d.}.$$


\begin{figure}[!h]
\begin{flushleft}
\medskip
\hrule
\medskip
{\itshape
\begin{enumerate}
\item[\textit{0.}] Sample $\{u_0^m\}_{m=1}^M$ i.i.d.\@ from $\nu_0$ and define
the weights $w_0^m=M^{-1}$ for $m=1,\cdots, M.$ 
%
%
Set $n=1$ and $l=0$.
\vspace{0.18cm}
\item[\textit{1.}]  For each $m$ set $\hat{w}_n^m=\ell_{n-1}(u_{n-1}^m)w_{n-1}^m$
and sample $u_n^m$ from $K_n(u_{n-1}^m,\cdot);$ calculate the normalized weights
\begin{equation*}
w_n^m  = \hat{w}_n^m/\bigl(\sum_{m=1}^M \hat{w}_n^m\bigr).
\end{equation*}
%
\item[\textit{2.}] Calculate the Effective Sample Size (ESS):
\begin{equation}
\label{eq:ess_def}
\textrm{ESS}_{(n)}(M) := \frac{\left(\sum_{m=1}^M w_n^m\right)^2}{\sum_{m=1}^M (w_n^m)^2} \ .
\end{equation}
If $ESS_{(n)}(M) \le M_{\rm thres}$: \\
\hspace{0.3cm} resample $\{u_{n}^{m}\}_{m=1}^M$ 
according to the normalized weights $\{w_n^m\}_{m=1}^M$;\\ 
\hspace{0.3cm} re-initialise the weights by setting $w_{n}^m=M^{-1}$
for $m=1,\cdots, M$;\\
\hspace{0.3cm} let $\{u_{n}^{m}\}_{m=1}^M$ 
now denote the resampled particles.\\

\item[\textit{3.}] If $n<p$ set $n=n+1$ and return to Step 1; otherwise stop.

%
\end{enumerate} }
\medskip
\hrule
\medskip
\end{flushleft}
\vspace{-0.5cm}
\caption{Standard SMC Samplers. $M_{\rm thres}\in\{1,\ldots, M\}$ 
is a user defined parameter.}
\label{tab:SMC}
\end{figure}

\subsection{Convergence Property}
\label{ssec:t}

The issue of dimensionality in SMC methods has attracted substantial attention in the literature \cite{beskos,beskos1,beskos2,rebs}. In this section, using 
a simple approach for the analysis of particle filters which is clearly 
exposed in \cite{rebs}, we show that for
our SMC method it is possible to prove dimension-free error bounds. Whilst the 
theoretical result in this subsection is not entirely new (similar results 
follow from the work in \cite{delmoral,delmoral1}), a direct and simple proof 
is included for non-specialists in SMC methods,
to highight the utility of SMC methods in inverse problems, and to connect
with related recent results in dimension-independent MCMC, such as
\cite{HSV}, which are far harder to establish.

We will consider the algorithm in Figure \ref{tab:SMC} with $M_{\rm thres}=M$, 
so one resamples at every time step (and this is multinomially). 
Note that then, for $n\geq 0$, at the end of each step of
the algorithm the approximation to $\nu_n$ is given by
\begin{equation}
\label{eq:A}
\nu_n^M(du) := \frac{1}{M}\sum_{m=1}^M \delta_{u_n^m}(dx), 
\end{equation}
which follows from the algorithm in Figure \ref{tab:SMC} with $M_{\rm thres}=M$
or, equivalently, \eqref{eq:PF}.

Throughout, we will assume that there exists a $\kappa>0$ such that for each $n\geq 0$ and any $u\in E$
\begin{equation}
\kappa \leq \ell_n(u) \leq 1/\kappa\label{eq:hyp}.
\end{equation}
We note that this holds for the elliptic inverse problem from the previous
section, when the uniform prior $\nu_0$ is employed; see Lemma \ref{lem:note}.

Let $\cP$ denote the collection of all probability measures on $E$. 
Let $\mu=\mu(\omega)$ and $\nu=\nu(\omega)$ denote two possibly random 
elements in $\cP$,  and $\mathbb{E}^{\omega}$ expectation w.r.t.~$\omega.$ 
We define the distance between $\mu, \nu \in \cP$ by 
$$
d(\mu,\nu) = {\rm sup}_{|f|_\infty \leq 1} \sqrt{ \mathbb{E}^{\omega} |\mu(f)-\nu(f)|^2 },
$$
where the supremum is over all $f: E \to \bbR$ with 
$|f|_\infty:=\sup_{v\in E}|f(v)| \leq 1.$
This definition of distance is indeed a metric on the space of random
probability measures; in particular it satisfies the triangle inequality.
In the context of SMC the randomness underlying the approximations
\eqref{eq:A} comes from the various sampling operations within the
algorithm.

We have the following convergence result for the SMC algorithm.

\begin{theorem}\label{th39}
Assume \eqref{eq:hyp} and consider the SMC algorithm with
$M_{\rm thresh}=M$. Then, for any $n\geq 0$,
$$
d(\nu_n^M,\nu_n) \le \sum_{j=0}^n (2\kappa^{-2})^j \frac{1}{\sqrt M}.
$$
\end{theorem}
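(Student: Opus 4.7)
The strategy is an inductive application of the triangle inequality for $d$, splitting each step of the SMC recursion $\nu_n^M = S^M \Phi_n \nu_{n-1}^M$ into a sampling step and a propagation step:
\[
d(\nu_n^M,\nu_n) \;\le\; d\bigl(S^M \Phi_n \nu_{n-1}^M,\, \Phi_n \nu_{n-1}^M\bigr) \;+\; d\bigl(\Phi_n \nu_{n-1}^M,\, \Phi_n \nu_{n-1}\bigr).
\]
I will bound the first term uniformly by $1/\sqrt{M}$ (the Monte Carlo sampling error) and the second by $2\kappa^{-2}\, d(\nu_{n-1}^M,\nu_{n-1})$ (local Lipschitzness of the map $\Phi_n$), then iterate from the base case $d(\nu_0^M,\nu_0)\le 1/\sqrt{M}$ which is the standard i.i.d.\ Monte Carlo estimate.

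For the sampling bound, I would condition on the (random) measure $\mu:=\Phi_n\nu_{n-1}^M$: given $\mu$, the particles produced by $S^M$ are i.i.d.\ draws from $\mu$, so for any $f$ with $|f|_\infty\le 1$, $\mathbb{E}^{\omega}\bigl[|S^M\mu(f)-\mu(f)|^2 \mid \mu\bigr] \le |f|_\infty^2/M\le 1/M$. Taking total expectation and then supremum over $f$ gives $d(S^M\mu,\mu)\le 1/\sqrt{M}$ regardless of the randomness in $\mu$.

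The main work is the Lipschitz estimate $d(\Phi_n\mu,\Phi_n\mu')\le 2\kappa^{-2}\,d(\mu,\mu')$. The key trick is to recenter the test function: setting $a := \Phi_n\mu'(f)$, one has the identity $\mu'\bigl(\ell_{n-1}(K_n f - a)\bigr)=0$ by definition of $a$, and hence
\[
\Phi_n\mu(f) - \Phi_n\mu'(f) \;=\; \frac{\bigl(\mu-\mu'\bigr)\!\bigl(\ell_{n-1}(K_n f - a)\bigr)}{\mu(\ell_{n-1})}.
\]
Since $K_n$ is Markov, $|K_n f|_\infty\le|f|_\infty\le 1$ and $|a|\le 1$, giving $|K_n f - a|_\infty\le 2$. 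Combined with $\ell_{n-1}\le \kappa^{-1}$ and $\mu(\ell_{n-1})\ge \kappa$, the function $\psi := \frac{\kappa}{2}\ell_{n-1}(K_n f - a)$ satisfies $|\psi|_\infty\le 1$ and
\[
|\Phi_n\mu(f)-\Phi_n\mu'(f)| \;\le\; 2\kappa^{-2}\,|\mu(\psi)-\mu'(\psi)|.
\]
Crucially, $a$ and $\psi$ depend on $f$ and on the deterministic target $\mu'$ but not on the randomness in $\mu$, so after squaring, taking $\mathbb{E}^{\omega}$, taking square roots, and taking the supremum over $|f|_\infty\le 1$, the right-hand side is bounded by $2\kappa^{-2}\,d(\mu,\mu')$ as claimed.

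Putting these two bounds together yields the one-step recursion $d(\nu_n^M,\nu_n) \le 1/\sqrt{M} + 2\kappa^{-2}\, d(\nu_{n-1}^M,\nu_{n-1})$, and unrolling gives the geometric sum $\sum_{j=0}^{n}(2\kappa^{-2})^j/\sqrt{M}$. The main obstacle is finding a clean proof of the Lipschitz constant $2\kappa^{-2}$; a naive splitting of the quotient $\mu(\ell_{n-1}K_n f)/\mu(\ell_{n-1})-\mu'(\ell_{n-1}K_n f)/\mu'(\ell_{n-1})$ into two separately bounded differences gives a worse constant, and the recentering trick above is what makes the estimate sharp. Note that \eqref{eq:hyp} is supplied by Lemma \ref{lem:note} in the elliptic inverse problem setting, so the resulting constants are independent of the dimension of any discretization of the parameter space, which is precisely the mesh-independence claim.
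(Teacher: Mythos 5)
Your proof is correct, and its top-level architecture is exactly the paper's: the same triangle-inequality split of $\nu_n^M=S^M\Phi_n\nu_{n-1}^M$ against $\nu_n=\Phi_n\nu_{n-1}$ into a sampling term and a propagation term, the same conditional i.i.d.\ argument giving the uniform $1/\sqrt M$ sampling bound (Lemma \ref{lem:tech_lem2}), the same Lipschitz constant $2\kappa^{-2}$ for $\Phi_n$ (Lemma \ref{lem:tech_lem}), and the same unrolling into the geometric sum. The one place you genuinely diverge is inside the Lipschitz lemma: you recenter the test function by $a=\Phi_n\mu'(f)$ so that the difference collapses to a single term
\[
\Phi_n\mu(f)-\Phi_n\mu'(f)=\frac{(\mu-\mu')\bigl(\ell_{n-1}(K_nf-a)\bigr)}{\mu(\ell_{n-1})},
\]
whereas the paper keeps two terms,
\[
[\Phi_n\mu-\Phi_n\mu'](f)=\frac{1}{\mu(\ell_{n-1})}[\mu-\mu'](\ell_{n-1}K_nf)+\frac{\mu'(\ell_{n-1}K_nf)}{\mu(\ell_{n-1})\mu'(\ell_{n-1})}[\mu'-\mu](\ell_{n-1}),
\]
and applies Minkowski, observing that the ratio $\mu'(\ell_{n-1}K_nf)/\mu'(\ell_{n-1})$ is an expectation of $f$ and hence bounded by $1$. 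That observation is the same cancellation your recentering exploits, so your aside that ``a naive splitting of the quotient into two separately bounded differences gives a worse constant'' does not apply to the paper's splitting: it also yields $\kappa^{-2}+\kappa^{-2}=2\kappa^{-2}$ (the truly naive bound that replaces the ratio by $\kappa^{-2}$ would indeed be worse). Your version has the minor advantage of producing a single test function $\psi$ with $|\psi|_\infty\le 1$, which makes the passage to the supremum slightly cleaner, and you are right to flag that $\psi$ must not depend on the randomness in $\mu$ — this is why the lemma must be applied with $\mu'=\nu_{n-1}$ deterministic, a point the paper leaves implicit.
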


\begin{proof}
For $n=0$ the result holds via Lemma \ref{lem:tech_lem2}. For $n>0$, 
we have, by the triangle inequality, Lemma \ref{lem:tech_lem} and
Lemma \ref{lem:tech_lem2} (which may be used by the conditional
independence structure of the algorithm), 
\begin{align*}
d(\nu_n,\nu_n^M) & = d(\Phi_n \nu_{n-1}, S^M\Phi_n\nu_{n-1}^M)\\
& \le d(\Phi_n \nu_{n-1},\Phi_n \nu_{n-1}^M)+d(\Phi_n \nu_{n-1}^M,S^M\Phi_n\nu_{n-1}^M)\\
& \le \frac{2}{\kappa^2}d(\nu_{n-1},\nu_{n-1}^M)+\frac{1}{\sqrt{M}}.
\end{align*}
Iterating gives the desired result.
\end{proof}

\begin{rem}
\label{rem:fil}
This theorem shows that the sequential particle filter actually
reproduces the true posterior distribution $\nu_p$, 
in the limit $M \to \infty$. 
We make some comments about this.

\begin{itemize}

\item
The measure $\nu_p$ is well-approximated by 
$\nu_p^M$ in the sense that, as the number of particles
$M \to \infty$, the approximating measure converges to the
true measure. The result holds in the infinite dimensional
setting. As a consequence the algorithm as stated is
robust to finite dimensional approximation.


\item In principle the theory applies even if the Markov
kernel $K_n$ is simply the identity mapping on probability
measures. However, moving the particles according to a 
non-trivial $\nu_n$-invariant measure is absolutely 
essential for the methodology
to work in practice. This can be seen by noting that if $K_n$
is indeed taken to be the identity map on measures then the
particle positions will be unchanged as $n$ changes, meaning
that the measure $\nu_p$ is approximated by weighted
samples (almost) from the prior, clearly undesirable in general.

\item The MCMC methods in \cite{cotter} provide explicit examples
of Markov kernels with the desired property of preserving
the measures $\nu_n$, including the infinite dimensional setting.

\item In fact, if the Markov kernel $K_n$ has some ergodicity properties 
then it is sometimes possible to obtain bounds which are {\em uniform} in $p$; see \cite{delmoral,delmoral1}.

\end{itemize}
\end{rem}

\begin{lem}\label{lem:tech_lem}
Assume \eqref{eq:hyp}. Then, for any $n\geq 1$ and any $\mu,\nu \in \cP$,
$$
d(\Phi_n\mu,\Phi_n\nu) \leq \frac{2}{\kappa^2}d(\mu,\nu).
$$
\end{lem}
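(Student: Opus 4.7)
The plan is to work directly from the explicit formula
$\Phi_n\mu(f) = \mu(\ell_{n-1} K_n f)/\mu(\ell_{n-1})$ and exploit the two-sided bound on $\ell_{n-1}$ provided by \eqref{eq:hyp}. Fix an arbitrary test function $f:E\to\bbR$ with $|f|_\infty\le 1$, and set $g := K_n f$. Because $K_n$ is a Markov kernel, $|g|_\infty\le 1$, so from the outset we have reduced the problem to controlling how $\Phi_n$ acts on bounded functions, with $g$ playing the role of the new test function.

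The key algebraic manoeuvre is the standard add-and-subtract decomposition
$$
\Phi_n\mu(f) - \Phi_n\nu(f) \;=\; \frac{\mu(\ell_{n-1} g) - \nu(\ell_{n-1} g)}{\mu(\ell_{n-1})} \;+\; \frac{\nu(\ell_{n-1} g)}{\mu(\ell_{n-1})\,\nu(\ell_{n-1})}\,\bigl(\nu(\ell_{n-1}) - \mu(\ell_{n-1})\bigr).
$$
Using the lower bound $\ell_{n-1}\ge \kappa$ gives $\mu(\ell_{n-1}),\nu(\ell_{n-1})\ge \kappa$, so the first prefactor is bounded by $1/\kappa$. The second prefactor is also bounded by $1/\kappa$ since $|\nu(\ell_{n-1}g)|\le \nu(\ell_{n-1})$. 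Thus, pointwise in $\omega$,
$$
|\Phi_n\mu(f) - \Phi_n\nu(f)| \;\le\; \tfrac{1}{\kappa}\,|\mu(\ell_{n-1} g) - \nu(\ell_{n-1} g)| \;+\; \tfrac{1}{\kappa}\,|\mu(\ell_{n-1}) - \nu(\ell_{n-1})|.
$$

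Now apply Minkowski's inequality in $L^2(\omega)$ to get
$$
\sqrt{\E|\Phi_n\mu(f) - \Phi_n\nu(f)|^2} \;\le\; \tfrac{1}{\kappa}\sqrt{\E|\mu(\ell_{n-1} g)-\nu(\ell_{n-1}g)|^2} + \tfrac{1}{\kappa}\sqrt{\E|\mu(\ell_{n-1})-\nu(\ell_{n-1})|^2}.
$$
To link these two terms back to $d(\mu,\nu)$, use the upper bound $\ell_{n-1}\le 1/\kappa$ from \eqref{eq:hyp}: both $\kappa\ell_{n-1}g$ and $\kappa\ell_{n-1}$ are measurable functions with sup-norm at most $1$. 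Rescaling each term by $\kappa$ and taking the supremum over admissible test functions in the definition of $d$ bounds each of the two terms by $\kappa^{-2}d(\mu,\nu)$, yielding the desired bound $d(\Phi_n\mu,\Phi_n\nu)\le (2/\kappa^2)\,d(\mu,\nu)$.

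I do not expect a real obstacle — the argument is essentially bookkeeping. The only subtle point is that a factor $1/\kappa$ appears twice for genuinely different reasons: once from bounding the denominators $\mu(\ell_{n-1})$ and $\nu(\ell_{n-1})$ away from zero (the lower bound in \eqref{eq:hyp}), and once from rescaling $\ell_{n-1}g$ and $\ell_{n-1}$ to be sup-norm bounded by $1$ (the upper bound in \eqref{eq:hyp}). Note also that the Markov kernel $K_n$ enters the argument only through the fact that $|K_n f|_\infty\le|f|_\infty$; no invariance property of $K_n$ is used here, which is why this same contraction estimate works uniformly over all admissible kernels.
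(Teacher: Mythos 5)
Your proposal is correct and follows essentially the same route as the paper: the identical add-and-subtract decomposition of $\Phi_n\mu(f)-\Phi_n\nu(f)$, the observation that $\nu(\ell_{n-1}K_nf)/\nu(\ell_{n-1})$ is bounded by $1$, Minkowski in $L^2(\omega)$, and the rescaling of $\ell_{n-1}K_nf$ and $\ell_{n-1}$ by $\kappa$ so each term is controlled by $\kappa^{-2}d(\mu,\nu)$. The only differences are cosmetic (naming $g=K_nf$ and bounding the prefactors pointwise before rather than after taking expectations).
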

\begin{proof}
For any measurable $f:E\rightarrow\mathbb{R}$ 
we have
\begin{align*}
[\Phi_n\mu-\Phi_n\nu](f)  &= \frac{1}{\mu(\ell_{n-1})}[\mu-\nu](\ell_{n-1}K_n(f))\\
&\quad\quad\quad\quad\quad + 
\frac{\nu(\ell_{n-1}K_n(f))}{\mu(\ell_{n-1})\nu(\ell_{n-1})} [\nu-\mu](\ell_{n-1}).
\end{align*}
So we have, by Minkowski,
\begin{align*}
\mathbb{E}^{\omega}[|[\Phi_n\mu-\Phi_n\nu](f)|^2]^{1/2} & \leq
\mathbb{E}^{\omega}\Big[\Big|\frac{1}{\mu(\ell_{n-1})}[\mu-\nu](\ell_{n-1}K_n(f))\Big|^2\Big]^{1/2}\\
&
\quad\quad\quad\quad\quad+ \mathbb{E}^{\omega}\Big[\Big|\frac{\nu(\ell_{n-1}K_n(f))}{\mu(\ell_{n-1})\nu(\ell_{n-1})} [\nu-\mu](\ell_{n-1})]\Big|^2\Big]^{1/2}.
\end{align*}
Note that the ratio
$$\frac{\nu(\ell_{n-1}K_n(f))}{\nu(\ell_{n-1})}$$
is an expectation of $f$ and is hence bounded by $1$ in modulus,
if $|f|_{\infty} \le 1.$
Then using the fact that $|\ell_{n-1}K_n(f)|_{\infty}\leq \kappa^{-1}$ and $\ell_{n-1}\geq \kappa$  (see \eqref{eq:hyp}) we deduce that
\begin{align*}
\mathbb{E}^{\omega}[|[\Phi_n\mu-\Phi_n\nu](f)|^2]^{1/2} &\leq \frac{1}{\kappa^2}\mathbb{E}^{\omega}\Big[\Big|[\mu-\nu](\ell_{n-1}K_n(f)\kappa)\Big|^2\Big]^{1/2}\\
&\quad\quad\quad\quad\quad+ 
\frac{1}{\kappa^2}\mathbb{E}^{\omega}\Big[\Big|[\nu-\mu](\ell_{n-1}\kappa)]\Big|^2\Big]^{1/2}.
\end{align*}
using the fact that $|\ell_{n-1}K_n(f)|_{\infty}\leq \kappa^{-1}$ and $\ell_{n-1}\leq \kappa^{-1}$, with the first following from  \eqref{eq:hyp} together
with the Markov property for $K_n$,
taking suprema over $f$ completes the proof.
\end{proof}

\begin{lem}
\label{lem:tech_lem2}
The sampling operator satisfies
$$\sup_{\mu \in \cP}d(S^M \mu,\mu) \leq \frac{1}{\sqrt{M}}.$$
\end{lem}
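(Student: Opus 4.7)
The plan is to exploit the definition of $S^M\mu$ as an empirical measure built from i.i.d.\ samples $v^{(1)},\ldots,v^{(M)}\sim\mu$, which immediately gives an unbiased average representation for $(S^M\mu)(f)$.

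First I would fix $\mu\in\cP$ and a measurable test function $f:E\to\bbR$ with $|f|_\infty\le 1$, and write
\begin{equation*}
(S^M\mu)(f)-\mu(f) = \frac{1}{M}\sum_{m=1}^M\bigl(f(v^{(m)})-\mu(f)\bigr),
\end{equation*}
where the $v^{(m)}$ are i.i.d.\ draws from $\mu$. Each summand has mean zero and, since $|f|_\infty\le 1$, variance bounded by $\mu(f^2)-\mu(f)^2\le 1$.

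Next I would compute the second moment using the independence of the samples: the cross terms vanish, leaving
\begin{equation*}
\mathbb{E}^\omega\bigl|(S^M\mu)(f)-\mu(f)\bigr|^2 = \frac{1}{M^2}\sum_{m=1}^M\mathrm{Var}_\mu(f) \le \frac{1}{M}.
\end{equation*}
Taking square roots and then the supremum over all $f$ with $|f|_\infty\le 1$ yields $d(S^M\mu,\mu)\le M^{-1/2}$. Because the bound on $\mathrm{Var}_\mu(f)$ used only $|f|_\infty\le 1$ and not the specific choice of $\mu$, the estimate is uniform in $\mu\in\cP$, which gives the claim.

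There is no real obstacle here: the proof is essentially the elementary $L^2$-bound on the Monte Carlo error of an empirical average, dressed up in the metric $d$. The only minor subtlety is making sure the expectation $\mathbb{E}^\omega$ in the definition of $d$ is interpreted consistently with the randomness of the sampling step, but this is precisely the randomness giving rise to the variance computation above.
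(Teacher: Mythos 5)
Your proposal is correct and follows essentially the same argument as the paper: centre $f$ at $\mu(f)$, use independence to kill the cross terms, bound the variance of each summand by $1$ using $|f|_\infty\le 1$, and note that the resulting $1/M$ bound is uniform in both $f$ and $\mu$. No differences worth remarking on.
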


\begin{proof}
Let $\nu$ be an element of $\cP(X)$ and $\{v^{(k)}\}_{k=1}^M$
a set of  i.i.d.\ samples with $v^{(1)} \sim \nu$; the randomness
entering the probability measures is through these samples, expectation
with respect to which we denote by $\bbE^{\omega}$ in what follows. Then
$$S^M \nu(f)=\frac{1}{M}\sum_{k=1}^M f(v^{(k)})$$
and, defining $\bF=f-\nu(f)$, we deduce that
$$S^M\nu(f)-\nu(f)=\frac{1}{M}\sum_{k=1}^M \bF(v^{(k)}).$$
It is straightforward to see that
$$\bbE^{\omega} \bF(v^{(k)})\bF(v^{(l)})=\delta_{kl}\bbE^{\omega} |\bF(v^{(k)})|^2.$$
Furthermore, for $|f|_\infty \le 1$,
$$\bbE^{\omega} |\bF(v^{(1)})|^2=\bbE^{\omega} |f(v^{(1)})|^2-|\bbE^{\omega} f(v^{(1)})|^2 \le 1.$$
It follows that, for $|f|_\infty \le 1$,
$$\bbE^{\omega}|\nu(f)- S^M\nu(f)|^2=\frac{1}{M^2}\sum_{k=1}^M \bbE^{\omega} |\bF(v^{(k)})|^2 \le \frac{1}{M}.$$
Since the result is independent of $\nu$ we may take the supremum over all
probability measures and obtain the desired result.
\end{proof}

\subsection{Adaptive SMC Samplers}
\label{ssec:a}

In practice, the SMC samplers algorithm requires the specification of $0<\phi_0<\cdots < \phi_{n-1}< \phi_n < \cdots<\phi_p=1$
as well as any parameters in the MCMC kernels. As demonstrated in \cite{jasra, kantas}, the theoretical validity of which is established in \cite{adaptive}, these parameters may be set on the fly.


First, we focus on the specification of the sequence of distributions.
Given step $n-1$ and $\pi_{n-1}(x)$, 
we select the next target density by adapting the temperatures to
a required value for the Effective Sample Size (ESS) statistic \eqref{eq:ess_def} as  in \cite{jasra} (see also \cite{zhou} for an alternative procedure).
%
So, for a user-specified threshold $M_{\textrm{thres}}$, we choose
$\phi_{n}$ as the solution of $ESS_{(n)}(M)=M_{\textrm{thres}}$. 
One can use an inexpensive bisection method to obtain $\phi_n$.

Second, we turn to the specification of the mutation kernels $K_n$.
Several options are available here, but we will use reflective random walk Metropolis proposals on each univariate component, conditionally independently.
We will adapt the random-move proposal scales, $\epsilon_{j,n}$, with $j$ the co-ordinate and $n$ the time index. A simple choice would be to tune $\epsilon_{j,n}$ to the marginal variance along 
the $j$-th co-ordinate; since this is analytically unavailable we opt 
for the SMC estimate at the previous time-step. Thus, we  
set $\epsilon_{n,j}=\rho_{n} \sqrt{\widehat{\mathrm{Var}}(U_{n,j})}$ 
where $\rho_n$ is a global scaling parameter. For $\rho_n$ itself, 
we propose to modify it
based on the previous average acceptance rate over the population of particles
(denoted $\alpha^N_{n-1}$),
to try to have average acceptance rates in a neighbourhood of 0.2 (see e.g.~\cite{beskos_non} and the references therein for a justification). Our adaptive strategy works as follows;
\begin{equation}
\rho_n = \left\{
\begin{array}{cl}
   2 \rho_{n-1}\ ,  &  \text{if } \alpha^N_{n-1} > 0.3 \\
   0.5 \rho_{n-1}\ , & \text{if } \alpha^N_{n-1} < 0.15 \\
   \rho_{n-1}\ , & \text{o.w.}
\end{array} 
\right.
\label{adapt_rho}
\end{equation}
Thus, we scale
$\rho$ upwards (downwards) if the last average acceptance rate went above (below) a predetermined neighbourhood of 0.2. This approach is different to the work in \cite{kantas}.

In addition, one can synthesize a number, say $M_n$, of baseline MCMC kernels, to obtain an overall effective one with good mixing; this is a new contribution relative to \cite{kantas}.
To adapt $M_n$, we follow the following heuristic;
We propose to select $M_n$ using $M_n = \lfloor \frac{m}{\rho_n^2} \rfloor$, with $m$ being a global parameter. The intuition is that for random-walk-type 
transitions of increment with small standard deviation $\delta$, one 
needs $\mathcal{O}(\delta^{-2})$ steps to travel distance $\mathcal{O}(1)$ in 
the state-space.
A final modification for practical computational reasons is that we force $M_n$ steps to lie within a predetermined bounded set, i.e. $[l,u]$.

The adaptive SMC algorithm works as in Figure \ref{tab:SMC}, except in step 1, before simulation from $K_n$ is undertaken, our adaptive procedure is implemented. Then one may resample (or not) and then move the samples according to $K_n$. In addition, the algorithm will
run for a random number of time steps and terminate when $\varphi_n=1$ (which will happen in a finite number of steps almost surely).
 
%

\section{Numerical Results}\label{sec:numerics}

In this section, we describe the details of our implementation (Section \ref{ssec:i}), describe the  objects of inference (Section \ref{ssec:oii})
and give our results in 2D (Section \ref{ssec:2d}) and 3D (Section \ref{ssec:3d}).


\subsection{Implementation Details}
\label{ssec:i}

The software used in our experiments has been implemented in C++ for the GNU$\backslash$Linux platform. We used the Libmesh library
for finite elements computation \cite{libMeshPaper}, we used 
the Fast Fourier Transform for rapid evaluation of the sum in
$u(\cdot)$ at pre-determined grid-points in ${\mathcal X}$
and we exploited parallel computation wherever possible, 
for which we used the MPI libraries.
Our experiments were run on a computer server with 23 ``Intel(R) Xeon(R)CPU X7460 @2.66GHz'' processors, each with 2 cores; 50 Gb memory and running ``RedHat Linux version 2.6.18-194.el5'' operating system. The experiments discussed in this paper used 20 processors.

All the colour plots of random fields (e.g. permeability fields) have been prepared using the rainbow color scheme from the R programming language/environment. The scheme quantizes the Hue quantity of HSV (Hue Saturation Value) triplet of a pixel. Our level of quantization is selected to be 256 (8 bits), with the Hue range of $[0,1]$, hence we normalize the random fields to this range and quantize to 8 bits to get the Hue value for a pixel. Saturation and Value were taken to be 1. All images were computed using $500 \times 500$ equi-spaced point evaluations from the respective random fields.



\subsection{Objects of Inference}
\label{ssec:oii}

The work in \cite{vollmer} investigates the performance of the Bayesian approach for our elliptic inverse problem and
gives sufficient conditions under which posterior consistency holds. Posterior consistency is concerned with ``ball probabilities''
of type
\[\lim_{{\textrm{Card}(O) \to \infty}} 
\int_{B_{\epsilon}}\frac{d\nu^y}{d\nu_0}(u)\nu_0(du) = 1
\]
where $y=\{y_x\}_{x \in O}$ and
$B_{\epsilon}$ is the $\epsilon$ neighbourhood of the true value of $u$.
One way to check such a result numerically is to use the posterior estimates obtained via our method. The estimated ball probabilities are computed as follows:
\begin{equation}
\sum_i w_p^i\mathbb{I}_{B_{\epsilon}}(u(x_p^i))
\label{ball-estimate}
\end{equation}

Although not all the conditions in \cite{vollmer} required for posterior 
consistency to hold are fulfilled, 
we will nonetheless empirically investigate such a consistency property.
This also provides a severe test for the SMC method since 
it implies posterior measures in the large dataset limit.

\subsection{Results in 2D}
\label{ssec:2d}

We consider equation \eqref{eq2} in dimension $d=2$ and with
source and sinks as specified in \eqref{eq3}. Our goal is to construct a sequence of posterior estimates,
corresponding to increasing number of observations in order to numerically illustrate posterior consistency. Table \ref{tab_2d_param}
shows the parameters used in our experiments.

\begin{table}[!htb]
\begin{tabular}{|l|c|}
  \hline
  {\bf Parameter name} & {\bf Value} \\ \hline
  frequency cutoff & 10 \\ \hline
  Finite Elements d.o.f.  & 100 \\ \hline
$\sigma^2$ & $5 \times 10^{-7}$ \\ \hline
  $M$ & 1000 \\ \hline
  $M_{\rm thres}$ & 600 \\ \hline
  $a$   & 4 \\ \hline
  $\bar{u}$ & 40 \\ \hline
  Wall-clock time & 11 hrs \\ \hline
\end{tabular}
\centering
\caption{Parameter values of used for the 2D experiments. Between 5 and 1000 steps are allowed for the iterates of the MCMC kernels.
The frequency cutoff determines the level of discretization of the permeability field. Finite
elements d.o.f. denotes the number of finite elements used in the numerical solution of the elliptic PDE, higher values indicate better approximation
at the expense of computational resources. For $a$ see \eqref{eq:adef}.}
\label{tab_2d_param}
\end{table}

To get an empirical sense of these parameters' effects on the distribution of the permeability field, we plot some samples from the prior field $u(x)$
in Figure \ref{prior_perm}. We will then generate 100 data points from the model, in the scenario where the permeability field is as Figure \ref{true_plot2D} (a).
In Figure \ref{true_plot2D} (b) we show the mean of the posterior permeability field, 
using these $100$ noisily observed data points. The posterior mean is estimated
as the mean of the particle distribution from SMC at the final time,
and notable in that it shows good agreement with true value of the 
permeability.


%

\begin{figure}[!htb]
        \center{\includegraphics[scale=0.21] {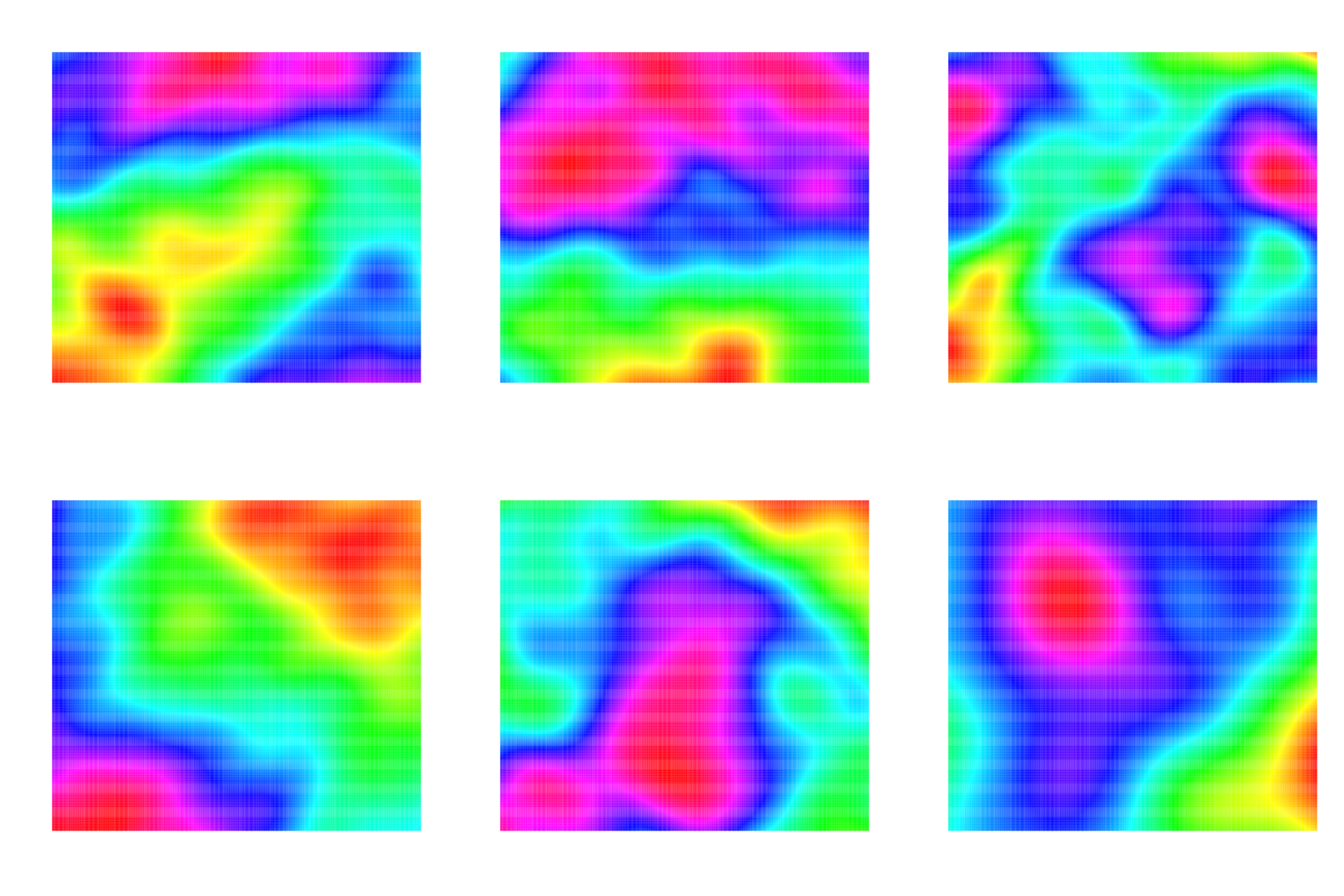}}
        \caption{Six permeability field samples drawn from the prior}
\label{prior_perm}
\end{figure} 

\begin{figure}
 \begin{subfigure}[t]{2.9in}
        \centering
        \includegraphics[height=2.9in]{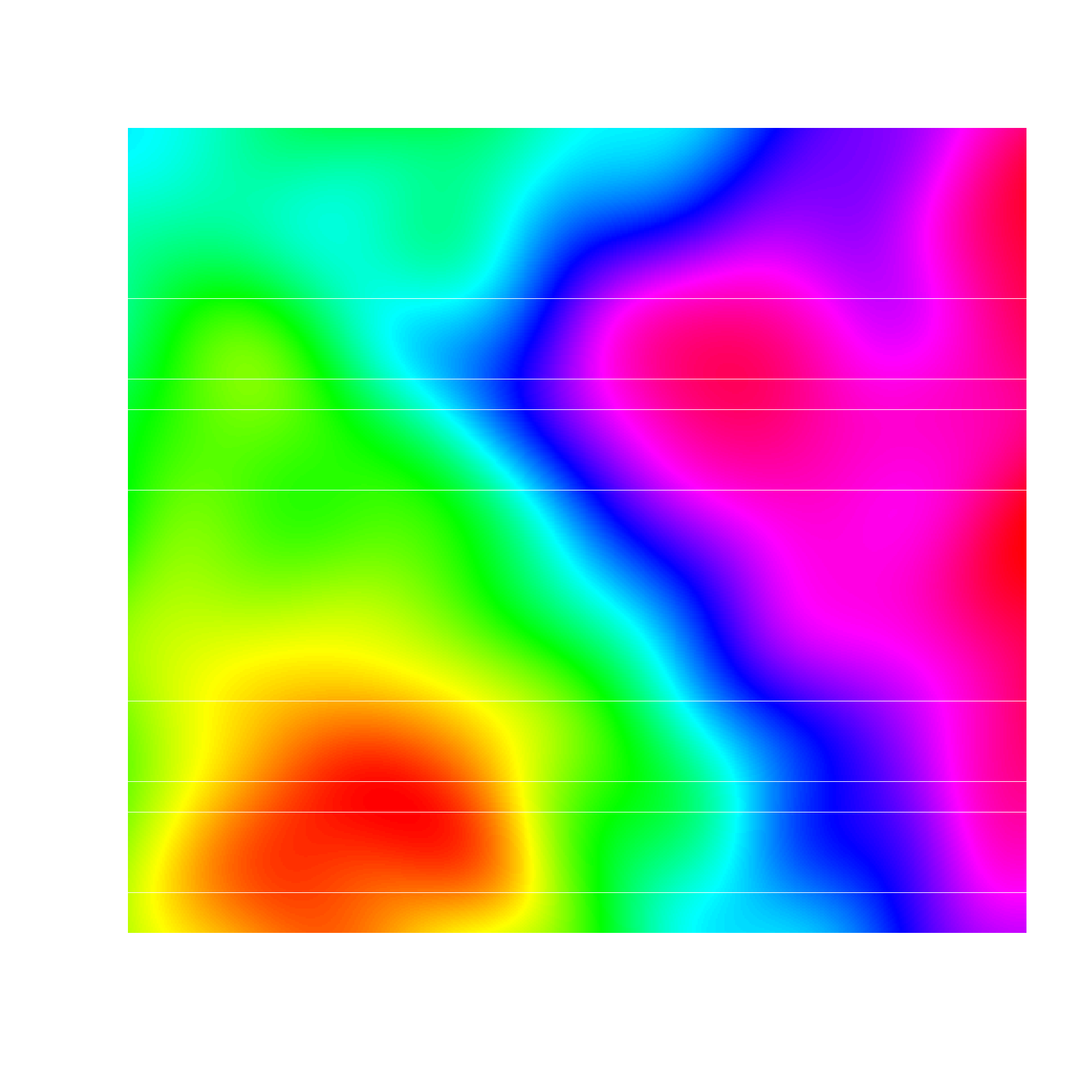}
        \caption{True permeability field.}
    \end{subfigure}
 \begin{subfigure}[t]{2.9in}
\includegraphics[height=2.9in]{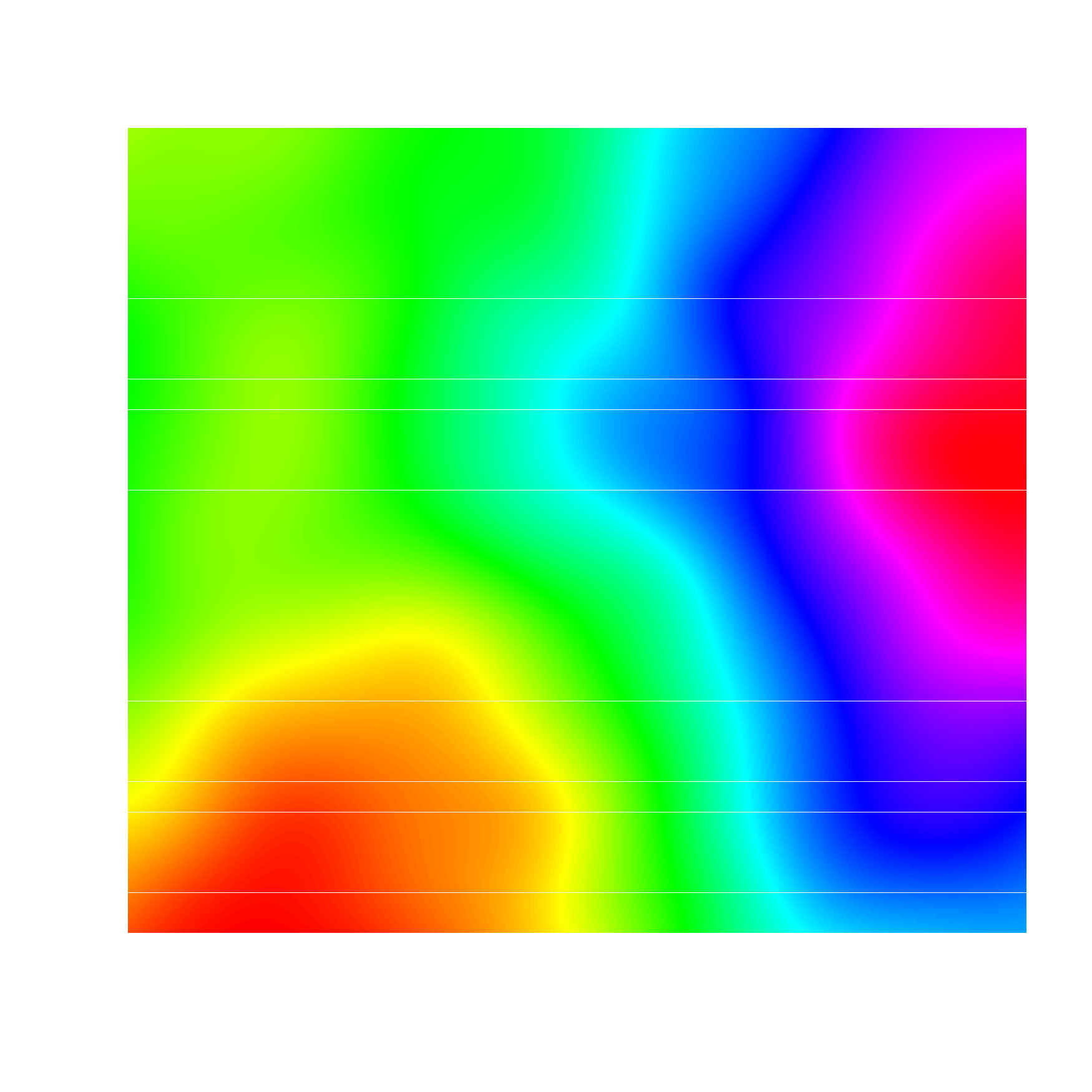}
        \caption{Estimated permeability field.}
    \end{subfigure}
\caption{The true and Posterior Estimated Permeability Field. The estimated filed is the mean estimated at the final step of the SMC method.
}
\label{true_plot2D}
\end{figure} 


In another experiment, designed to study posterior consistency,
a sequence of posterior estimates are formed by repeatedly running the adaptive SMC algorithm with, respectively, $4, 16, 36, 64$ and $100$ 
observations equi-spaced inside the domain of $[-\pi/2, \pi/2]^2$. The computed MSE and ball probabilities 
are given in Figure \ref{ball}, with the ball radius $\epsilon$ taken to be $0.17 \times 360$, where 360 is the number of parameters in the system, corresponding to a frequency cutoff of 10. 
The Figure suggests that as more data become available posterior
consistency is obtained as predicted, under slightly more restrictive 
assumptions than we have in play here, in \cite{vollmer}. This is interesting for
two reasons: firstly it suggests the potential for more refined Bayesian
posterior consistency analyses for nonlinear PDE inverse problems; 
secondly it demonstrates the potential to solve hard practical
Bayesian inverse problems and to obtain informed inference from a
relatively small number of observations.




\begin{figure}[!htb]
    \centering
    \begin{subfigure}[t]{2.9in}
        \centering
        \includegraphics[height=2.9in]{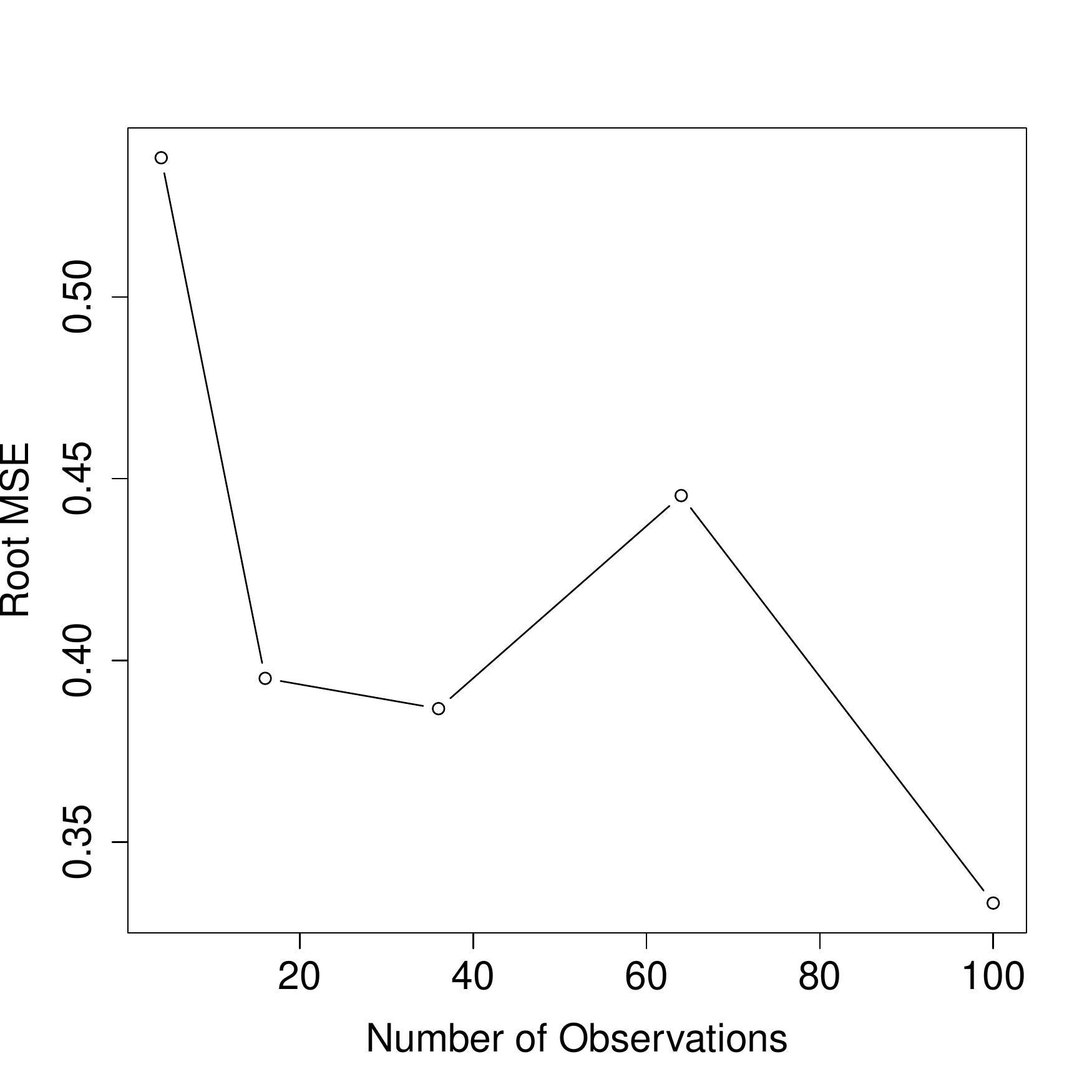}
        \caption{Root Mean Squared Error corresponding to the sequence of experiments}
    \end{subfigure}
    \begin{subfigure}[t]{2.5in}
        \centering
        \includegraphics[height=2.9in]{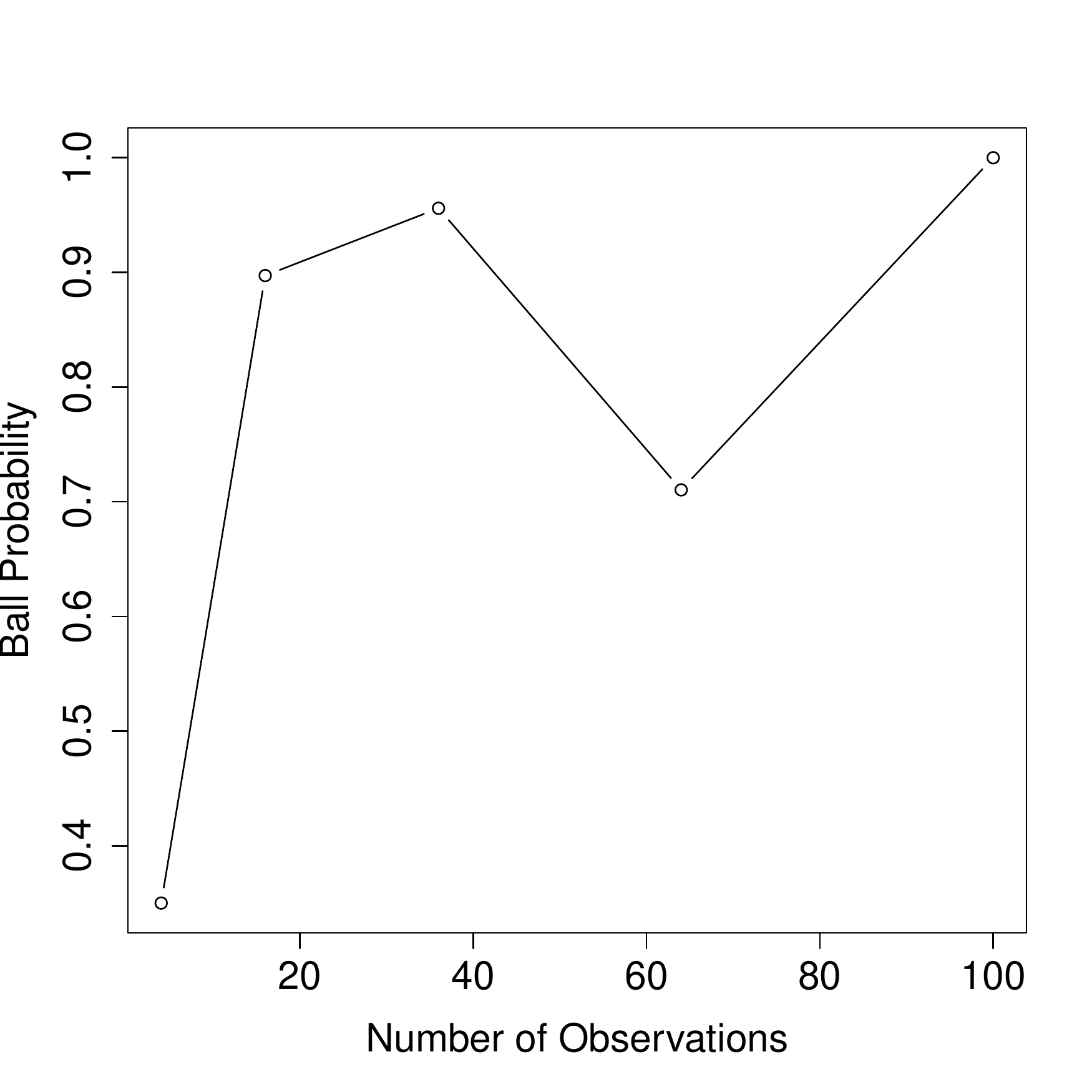}
        \caption{Estimated small-ball probabilities (Eqn. \ref{ball-estimate}) corresponding to the sequence of experiments}
    \end{subfigure}
    \caption{Numerical consistency checks for the sequence of experiments with 4,16,36,64 and 100 observations}
    \label{ball}
\end{figure}

Finally, Figure \ref{density2D} shows marginal posterior density estimates corresponding to 144 observations. The usual observation is to note the effectiveness of even the mode estimator in lower frequencies. Another important observation is the similarity of the high frequency marginal densities to the prior. In fact, it is this behaviour that makes a prior invariant MCMC proposal superior to others, i.e. the proposal itself is almost optimal for a wide range of coefficients in the problem.

\begin{figure}[!htb]
        \center{\includegraphics[scale=0.28] {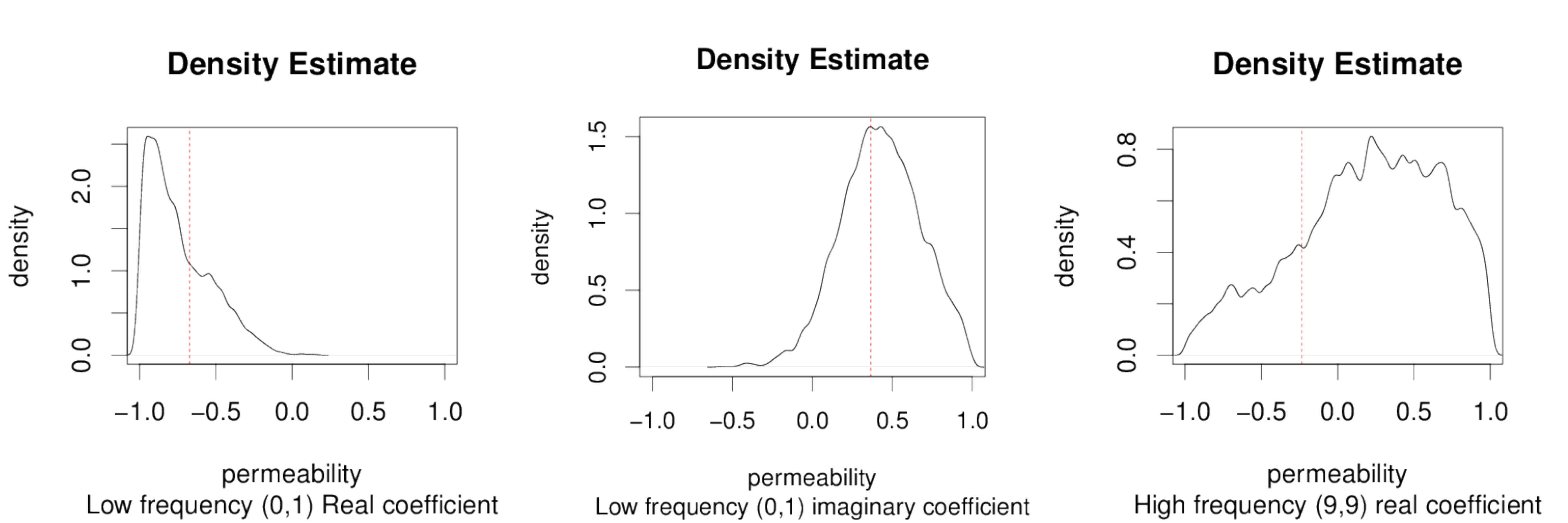}}
        \caption{Posterior marginal density estimates for two low and one high frequency coefficients in the 2D case}
\label{density2D}
\end{figure} 


\subsection{Results in 3D}
\label{ssec:3d}

A more realistic experiment is performed using the 3D setup discussed in Section \ref{3dinv}. In this setup, the computational 
aspects of the problem are further highlighted as the numerical solution of the forward operator becomes much harder due to the increased cardinality of the finite elements basis.
The values of parameters in this numerical study are given in Table \ref{3Dpar}. The data are generated from the model, under the specifications given in Table \ref{3Dpar}.

\begin{table}[!htb]
\begin{tabular}{|l|c|}
  \hline
  {\bf Parameter name} & {\bf Value} \\ \hline
  \# of Observations & 125 \\ \hline
  Frequency Cutoff & 5 \\ \hline
  Finite Elements d.o.f.  & 1000 \\ \hline
  $\sigma^2$ & $1 \times 10^{-8}$ \\ \hline
  $M$ & 1000 \\ \hline
  $M_{\rm thres}$ & 600 \\ \hline
  $a$   & 1 \\ \hline
  $\bar{u}$ & 100\\ \hline
  Wall-clock time & 10 days \\ \hline
\end{tabular}
\centering
\caption{Parameter values used for the 3D experiment discussed in this section. Between 5 and 200 steps are allowed for the iterates of the MCMC kernels.}
\label{3Dpar}
\end{table}

In Figure \ref{smc_performance}, we consider the performance of our SMC algorithm in this very challenging scenario. In Figure \ref{smc_performance} (a), we
can see the average acceptance rates of the MCMC moves, over the time parameter of the SMC algorithm. We can observe that these acceptance rates do not collapse to zero
and are not too far from 0.2. This indicates that the step-sizes are chosen quite reasonably by the adaptive SMC algorithm and the MCMC kernels have some mixing ability.
In Figure \ref{smc_performance} (b), we can see the number of MCMC iterations that are used per-particle over the time parameter of the SMC algorithm. We can observe,
as one might expect, that as the target distribution becomes more challenging, the number of MCMC steps required grows. Figure \ref{smc_performance} indicates reasonable
performance of our SMC algorithm.


In terms of inference, the posterior density estimates are shown in figure \ref{density}. Recall that the priors are uniform.
These estimates indicate a clear deviation from the prior specification, illustrating that the
data  influence our inference significantly. This is not obvious, and establishes that one can hope to use this Bayesian model in real applications.


\begin{figure}[!htb]
    \centering
    \begin{subfigure}[t]{2.9in}
        \centering
        \includegraphics[height=2.9in]{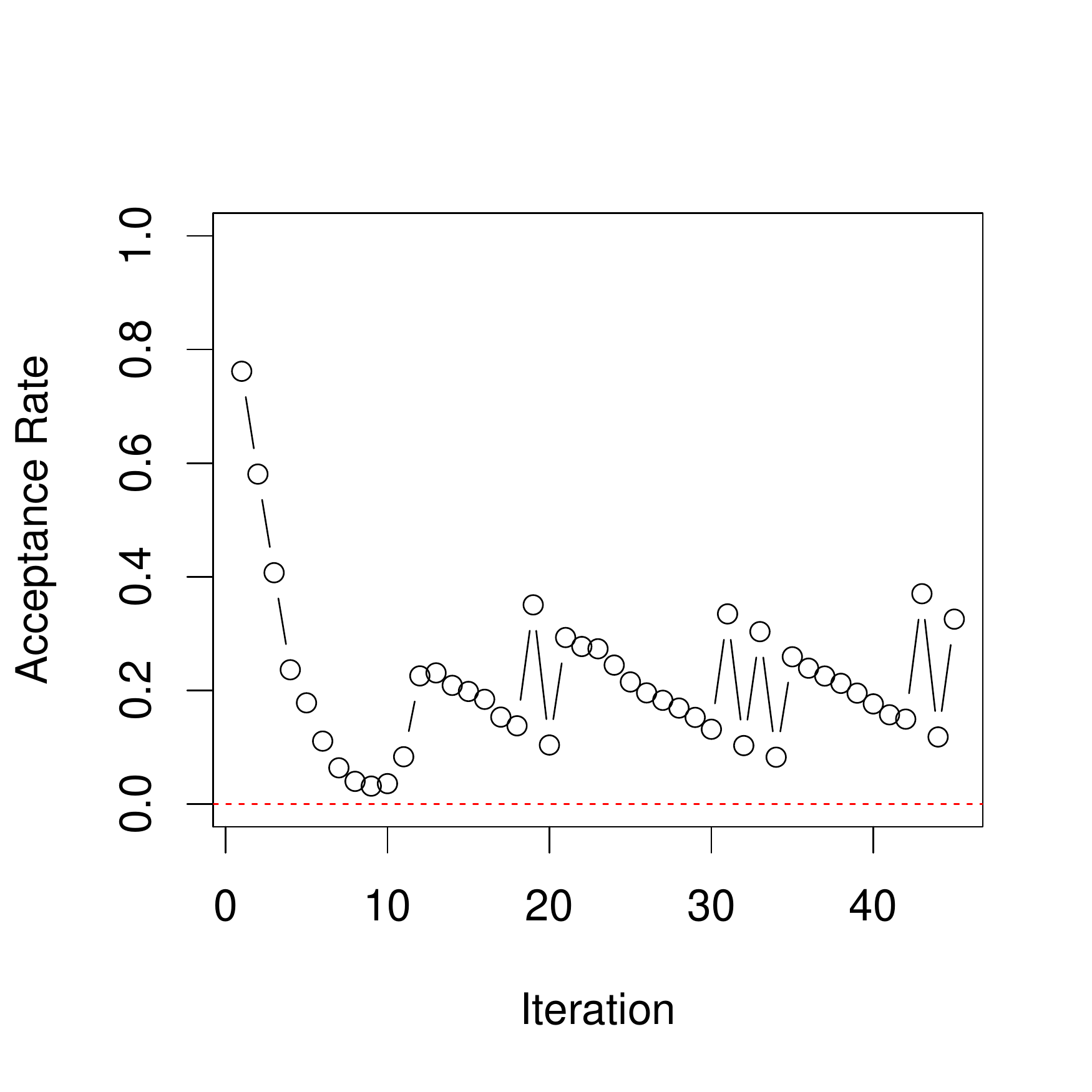}
        \caption{Average acceptance rates for each iteration of the adaptive-SMC algorithm}
    \end{subfigure}\,\,
    \begin{subfigure}[t]{2.5in}
        \centering
        \includegraphics[height=2.9in]{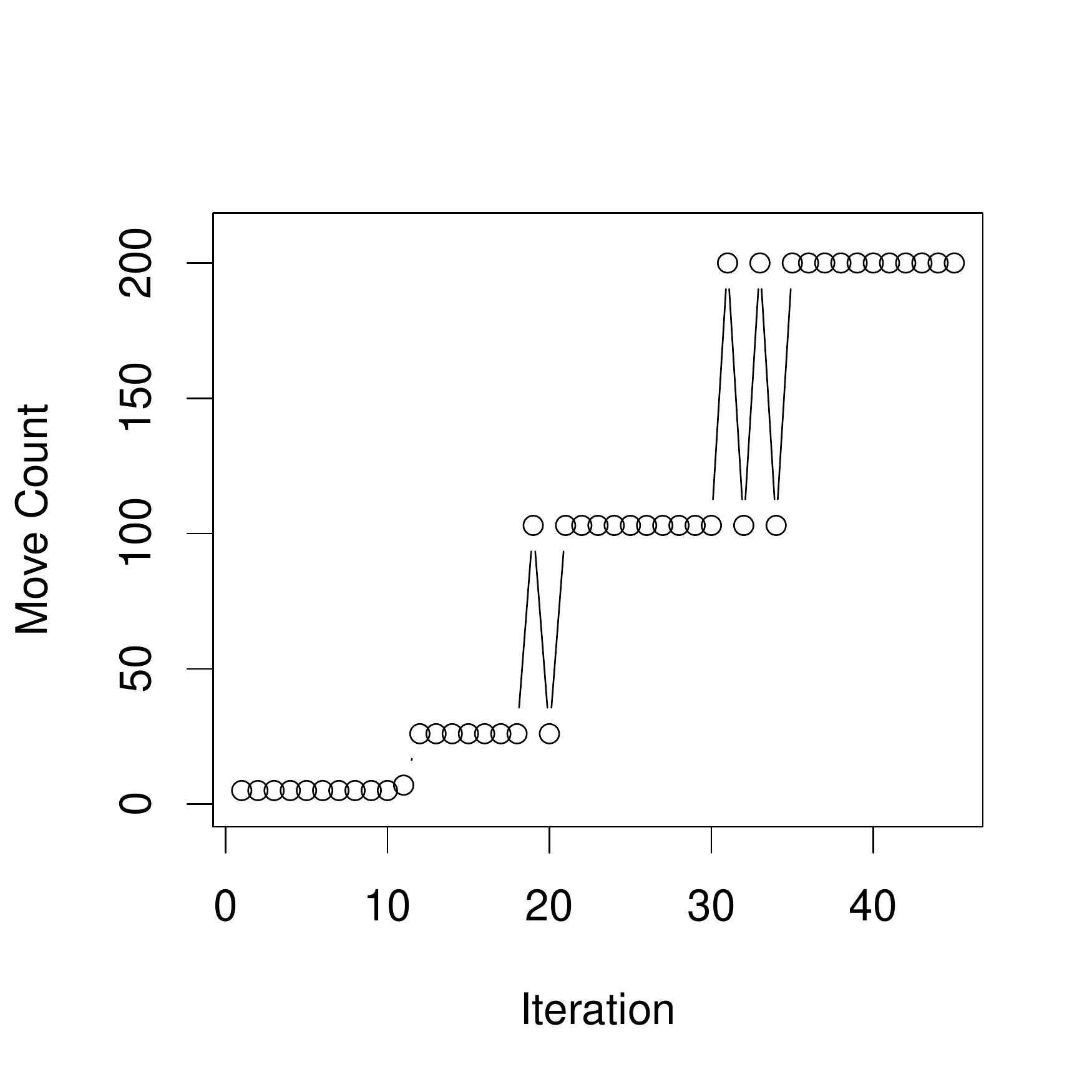}
        \caption{Adaptively selected MCMC steps for each iteration of the adaptive-SMC algorithm, bounded to be in [5,200].}
    \end{subfigure}
    \caption{SMC Performance for 3D Example.}
    \label{smc_performance}
\end{figure}

\begin{figure}[!htb]
        \center{\includegraphics[scale=0.28] {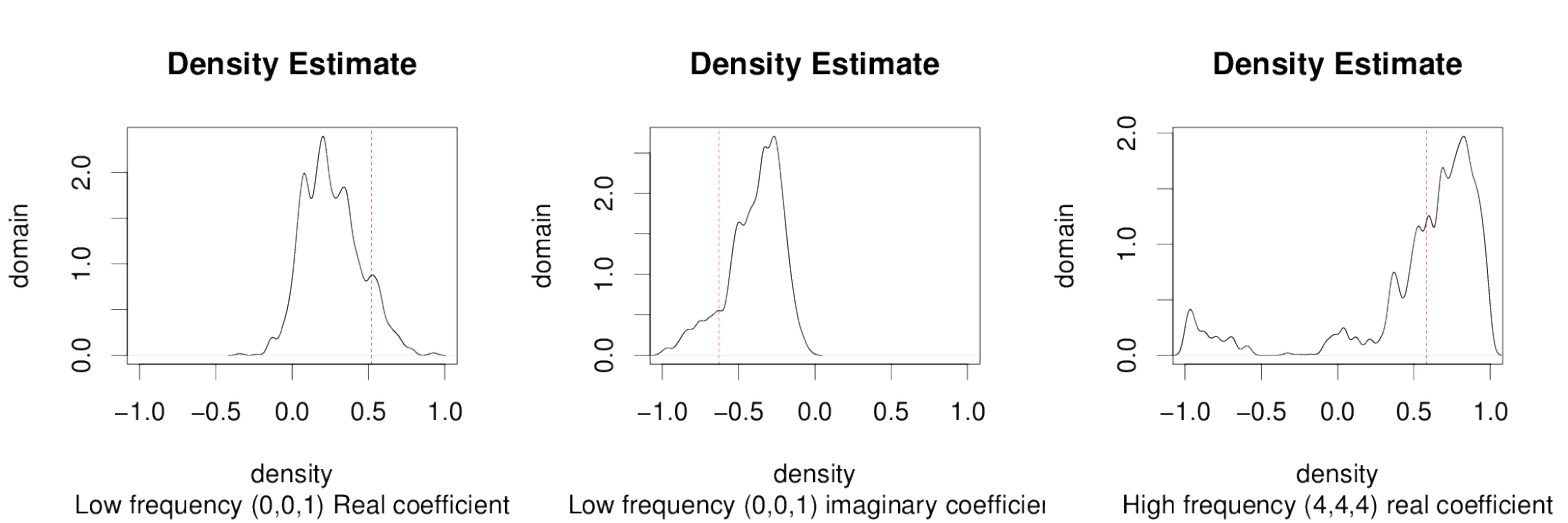}}
        \caption{Posterior marginal density estimates for two low and one high frequency coefficients in the 3D case}
\label{density}
\end{figure}

\section{Summary}\label{sec:summ}

In this article we have presented an SMC method for Bayesian inverse 
problems and applied it to a particular elliptic PDE inversion; the 
methodology, however, is transferable to other PDE inverse problems. 
Simulations demonstrated both the feasability of the SMC method for
challenging infinite dimensional inversion, as well as the property
of posterior contraction to the truth. In addition to simulations, we 
have provided a straightforward proof of the fact that SMC methods are 
robust to the dimension of the problem. 

There are several avenues for future research. Firstly, our error bounds 
explode w.r.t. the time parameter. It is of interest to find realistic 
conditions for which this is not the case (for instance the bounds in 
\cite{delmoral,delmoral1,whiteley} have assumptions which either do not hold or are hard to verify). 
Secondly, a further algorithmic innovation is to use multi-level Monte Carlo 
method as in \cite{hoang}, within the SMC context; this is being considered in  \cite{ml_smc}.
And finally it is of interest to consider the use of these methods to solve
other Bayesian inference problems.

\end{document}